\newcommand{\wt}{\mathrm{wt}}
\newcommand{\dist}{\mathrm{dist}}
\newcommand{\BG}{\mathit{B\!G}}
\algrenewcommand\algorithmicrequire{\textbf{Input:}}
\algrenewcommand\algorithmicensure{\textbf{Output:}}
\let\OldStatex\Statex
\renewcommand{\Statex}[1][3]{%
  \setlength\@tempdima{\algorithmicindent}%
  \OldStatex\hskip\dimexpr#1\@tempdima\relax}
\algrenewcommand{\algorithmiccomment}[1]{$/*$ #1 $*/$}
\begin{document}

\title{Shortest-Path Queries in Planar Graphs \\on GPU-Accelerated Architectures}
\titlerunning{Shortest-Path Queries}  
%
\author{Guillaume Chapuis \and Hristo Djidjev}
\authorrunning{G. Chapuis and H. Djidjev} 
\institute{Los Alamos National Laboratory, Los Alamos, NM 87545, USA\\
\email{\{gchapuis,djidjev\}@lanl.gov} 
}

\maketitle              

\begin{abstract}
We develop an efficient parallel algorithm for answering  
shortest-path queries in planar graphs and implement it on a multi-node CPU/GPU clusters. The algorithm uses a divide-and-conquer approach for decomposing the input graph into small and roughly equal subgraphs and constructs a distributed data structure containing shortest distances within each of those subgraphs and between their boundary vertices. For a planar graph with $n$ vertices, that data structure needs $O(n)$ storage per processor and allows queries to be  answered in $O(n^{1/4})$ time.
\keywords{shortest path problems, graph algorithms, distributed computing, GPU computing, graph partitioning}
\end{abstract}
\section{Introduction}
Finding shortest paths (SPs) in graphs has applications in transportation, social network analysis, network routing, and robotics, among others. The problem asks for a path of shortest length between one or more pairs of vertices. There are many algorithm for solving SP problems sequentially. Dijkstra's algorithm~\cite{dijkstra1959note} finds the distances between a source vertex $v$ and all other vertices of the graph in $O(m\log n)$ time, where $n$ and $m$ are the numbers of the vertices and edges of the graph, respectively. It can also be used to find efficiently the distance between a pair of vertices. This algorithm is nearly optimal (within a logarithmic factor), but has irregular structure, which makes it hard to implement efficiently in parallel. Floyd-Warshall's algorithm, on the other hand, finds the distances between all pairs of vertices of the graph in $O(n^3)$ time, which is efficient for dense ($m=\Theta(n^2)$) graphs, has a regular structure good for parallel implementation, but is inefficient for sparse ($m=O(n)$) graphs such as planar graphs. 

In this paper we are considering the query version of the problem. It asks to construct a data structure that will allow to answer any subsequent distance query fast. A distance query asks, given an arbitrary pair of vertices $v,w$, to compute $\dist(v,w)$. This problem has applications in web mapping services such as MapQuest and Google Maps. There is a tradeoff between the size of the data structure and the time for answering a query. For instance, Dijkstra's algorithm gives a trivial solution of the query version of the SP problem with (small) $O(n+m)$ space (for storing the input graph), but large $O(m\log n)$ query time (for running Dijkstra's algorithm with a source the first query vertex). On the other end of the spectrum, Floyd-Warshall's algorithm can be used to construct a (large) $O(n^2)$ data structure (the distance matrix) allowing (short) $O(1)$ query time (retrieving the distance from the data base). However, for very large graphs, the $O(n^2)$ space requirement is impractical. We are interested in an algorithm that needs significantly less than  than $O(n^2)$ space, but will answer queries faster than Disjkstra's algorithm. Our algorithm will use the structure of planar graphs for increased efficiency, as most road networks are planar or near-planar, and will also be highly parallelizable, making use of the features available in modern high-performance clusters and specialized processors such as the 
GPUs.

The query version for shortest path queries in planar graphs was proposed 
in~\cite{WG-queries} and after that different aspects of the problem were studied by multiple authors, e.g., \cite{Hutchinson:1999,Chen:2000,Kowalik03shortestpath,Mozes:2012}. 
Here we present the first distributed implementation for solving the problem 
that is designed to make use of the potential for parallelism offered by GPUs. Our solution makes use of the fast parallel algorithm for computing shortest paths in planar graphs from~\cite{ipdps-apsp}, resulting in asymptotically faster and also shown to be efficient in practice.


\section{Preliminaries}
Given a graph $G$ with a weight $\wt(e)$ on each edge $e$, the length of a path $p$ is the sum of the weights of the edges of the path. 
The \textit{single-pair shortest path problem}  (\textit{SPSP}) is, given a pair $v,w$ of vertices of $G$, to find a path between $v$ and $w$, called \textit{shortest path} (SP), with minimum length. The length of that path is called \textit{distance} between $v$ and $w$ and is denoted as $\dist(v,w)$. For any subgraph $H$ of $G$, the distance between $v$ and $w$ in $H$ is denoted as $\dist_H(v,w)$. The \textit{single-source  shortest path problem}  (\textit{SSSP}) is to find SPs from a fixed vertex $v$ to all other vertices of $G$. Finally, the \textit{all-pairs shortest path problem} (\textit{APSP}) is to find SPs between all pairs of vertices. There are distance versions of SPSP, SSSP, and APSP, which are more commonly studied, where the objective is to compute the corresponding distances instead of SPs. Most distance algorithms allow the corresponding SPs to be retrieved in additional time proportional to the number of the edges of the path. In this paper, by SPSP, SSSP, and APSP we mean the distance versions of these problems.

A $k$-partition $\cal P$ of $G$ is a set $V_1,\dots,V_k$ of subsets of $V(G)$, the set of the vertices of $G$, such that $V_i\cap V_j=\emptyset$ if $i\neq j$ and $\bigcup_{i=1}^k V_i=V(G)$. We call the subgraphs of $G$ induced by $V_i$ \textit{components} of $\cal P$. The \textit{boundary} of the partition consists consists of all vertices of $G$ that have at least one neighbor in a different component. We denote by $\BG(G)$ or simply by $\BG$ the subgraph of $G$ induced by the boundary vertices. For any $C\in {\cal P}$, we denote by $B(C)$ the set of all boundary vertices that are from $C$. For any planar graph of $n$ vertices and bounded ($O(1)$ as a function of $n$) vertex degree one can find in $O(n)$ time a $k$-partition $\cal P$ with $|B(C)|=O(\sqrt{n/k})$ for each component $C\in \cal P$.

\section{Algorithm overview and analysis}

Our algorithm works in two modes: preprocessing mode, during which a data structure is computed that allows efficient SP queries, and the query mode that uses that data structure to compute the distance between a query pair of vertices. We assume that the input is a planar graph $G$ of $n$ vertices and bounded vertex degree and the cluster has $p$ nodes.

\subsection{Preprocessing mode}
The preprocessing algorithm (Algorithm~\ref{alg:preprocessing}) has three phases. During the first phase (line 1), the graph is partitioned and each component is assigned to a distinct cluster node. During the second phase (lines 2-5), the APSP problem is solved for each component $C$ independently and in parallel and the computed distance matrix APSP$(C)$ is stored at the same node. Finally, in the third phase (lines 6-10), the boundary graph $\BG$ is constructed and the APSP is solved for $\BG$. That computation is done distributedly such that the distances from vertex $v\in \BG$ to all other vertices of $\BG$ are computed at the node containing $v$, by using Dijkstra's algorithm~\cite{dijkstra1959note}. The computed distance matrix is stored at the node that has done the computations. Hence, at the end of the algorithm, the node $N(C)$ contains two matrices: one containing the SP distances in $C$ and the other containing all SP distances in $\BG$ with source a vertex in $\BG\cap C$.

One can think of $\BG$ as a compressed version of $G$ where the non-boundary vertices are removed, but are implicitly represented in $\BG$ by the information encoded in its edge weights. Note however that the distances APSP$(C)$ (and the corresponding edge weights of $\BG$) are not distances in $G$; the reason is that a shortest path between two vertices $v$ and $w$ from $C$ might pass through vertices not in $C$. Hence the following fact is non-trivial.
\begin{lemma} \label{lem:BG}\cite{ipdps-apsp}
For any two vertices $v,w\in \BG$ the distance between $v$ and $w$ in $\BG$ is equal to the distance between $v$ and $w$ in $G$.
\end{lemma}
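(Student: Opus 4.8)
\medskip
\noindent\textbf{Proof plan.}
The plan is to establish the two inequalities $\dist_G(v,w)\le\dist_{\BG}(v,w)$ and $\dist_{\BG}(v,w)\le\dist_G(v,w)$ separately. I will use the concrete description of $\BG$ that is implicit in the construction: its vertex set is the set of boundary vertices, and its edges are (i) for every component $C$ and every pair $x,y\in B(C)$, an edge $xy$ of weight $\dist_C(x,y)$, together with (ii) the edges of $G$ that join two boundary vertices lying in different components, carrying their original weights. Throughout I assume edge weights are nonnegative, so that shortest paths may be taken to be simple and the reductions below are valid.

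For the first inequality I would show that every edge of $\BG$ can be expanded into a path of $G$ of exactly the same length: a type-(i) edge $xy$ is replaced by a shortest $x$--$y$ path inside the component $C$, which is in particular a path of $G$; a type-(ii) edge is already an edge of $G$. Taking a shortest $v$--$w$ path in $\BG$ and concatenating the expansions of its edges yields a $v$--$w$ walk in $G$ of the same total length, hence $\dist_G(v,w)\le\dist_{\BG}(v,w)$ (trivially true when $\dist_{\BG}(v,w)=\infty$).

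The second inequality is the substantive one. Let $P$ be a shortest $v$--$w$ path in $G$, and let $v=b_0,b_1,\dots,b_r=w$ be the boundary vertices appearing on $P$, in the order $P$ visits them ($v$ and $w$ are boundary by hypothesis, so they are among them). The key claim is that for each $t$ there is an edge of $\BG$ between $b_t$ and $b_{t+1}$ of weight at most the length of the subpath $P[b_t,b_{t+1}]$ of $P$. If $P[b_t,b_{t+1}]$ is a single edge of $G$, this is immediate: either $b_t,b_{t+1}$ are in different components and the edge is a type-(ii) edge of $\BG$ of the same weight, or they are in a common component $C$ and $\dist_C(b_t,b_{t+1})\le\wt(b_tb_{t+1})$ gives a type-(i) edge. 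Otherwise every internal vertex of $P[b_t,b_{t+1}]$ is non-boundary, so each such vertex has all of its $G$-neighbours in its own component; walking along $P[b_t,b_{t+1}]$ then forces every vertex of that subpath, including $b_t$ and $b_{t+1}$, to lie in one common component $C$. Hence $b_t,b_{t+1}\in B(C)$ and $\dist_C(b_t,b_{t+1})\le\mathrm{length}(P[b_t,b_{t+1}])$, which is a type-(i) edge of $\BG$ of small enough weight. Concatenating these edges for $t=0,\dots,r-1$ gives a $v$--$w$ walk in $\BG$ of length at most $\sum_t \mathrm{length}(P[b_t,b_{t+1}])=\mathrm{length}(P)=\dist_G(v,w)$, so $\dist_{\BG}(v,w)\le\dist_G(v,w)$.

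I expect the only real obstacle to be the ``stays in one component'' step for a boundary-free subpath of $P$; it rests entirely on the defining property of non-boundary vertices (all neighbours in the same component), and once it is in hand the rest is bookkeeping about decomposing $P$ at its boundary vertices and patching the pieces back together inside $\BG$.
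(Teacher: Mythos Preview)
Your argument is correct. The two-inequality approach is the standard way to prove this kind of ``shortcut graph preserves distances'' statement, and each step you outline goes through: expanding $\BG$-edges back into $G$-paths for one direction, and cutting a shortest $G$-path at its boundary vertices for the other. The only place worth a second look is your claim that a boundary-free subpath stays inside one component; your justification (each internal vertex is non-boundary, hence all its neighbours lie in its own component, so the component label cannot change as you walk along the subpath, and the two endpoints inherit that label via their adjacency to the first and last internal vertex) is exactly right.

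Note, however, that the paper does \emph{not} give its own proof of this lemma: it is stated with a citation to~\cite{ipdps-apsp} and no argument is supplied here. So there is no ``paper's proof'' to compare against in this document---you have effectively filled in what the authors imported from their earlier work. Your write-up would serve perfectly well as a self-contained proof of the lemma.

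One cosmetic remark: the paper's description of $\BG$ (Preliminaries plus Step~4 of Algorithm~\ref{alg:preprocessing}) is the induced subgraph on boundary vertices augmented with the intra-component shortcut edges, so edges of $G$ between two boundary vertices of the \emph{same} component are also present, not only the cross-component ones you list under~(ii). This does not affect your proof, since any such edge is dominated by the corresponding type-(i) shortcut, and you handle that case correctly anyway.
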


We will next estimate the time and space (memory) required to run the algorithm. As $G$ is planar and of bounded vertex degree (as a function of $n$), it can be divided in $O(n) $ time into $k$ parts so that each part has no more than $( n/k)$ vertices and $O(\sqrt{n/k})$ boundary vertices  \cite{frederickson91}. We will estimate the requirements of each phase. Since the maximum amount of coarse-grained parallelism of Algorithm~\ref{alg:preprocessing} is $\min\{p,k\}$, we assume without loss of generalization that $p\leq k$.

Phase 1 requires $O(n)$ running time and $O(n)$ space \cite{frederickson91}.

The complexity of Phase 2 is dominated by the time for computing distances in line 3. We assume that we are using the algorithm from~\cite{ipdps-apsp} that can be implemented efficiently on a GPU-accelerated architecture and has complexity $O(N^{9/4})$. Then Phase~2 requires 
$O(( k/p ) (n/k)^{9/4})=O(n^{9/4}/(pk^{5/4}))$ time and 
$kO((n/k)\sqrt{n/k})=O(n^{3/2}/k^{1/2})$ total space. The space per processor is $kO((n/k)^2)=O(n^2/k)$.

For Phase 3, the number of the vertices of $\BG$ is $k\,O(\sqrt{n/k})=O(\sqrt{nk})$ and the number of the edges is $k\,O((\sqrt{n/k})^2)=O(n)$. One execution of line~8 (for one component $C$) takes $( k/p )|\BG\cap C||E(\BG)|\log(|\BG|)=( k/p ) O(\sqrt{n/k})O(n\log n)$ time and $O(n)$ space. The space needed for one iteration of Step~9 is $|\BG\cap C||\BG|=O(\sqrt{n/k}\sqrt{nk})=O(n)$. Hence Phase~3 requires 
$O(( k/p ) n^{3/2}/k^{1/2}\log n)$ $=O(n^{3/2}k^{1/2}\log n/p)$ time and $O(nk/p)$ space per processor.

Summing up the requirements for Phases 1, 2, and 3, we get $O(n^{9/4}/(pk^{5/4})+n^{3/2}k^{1/2}\log n/p))$ time and $O(n+n^2/(pk)+nk/p)$ space per processor needed for Algorithm~\ref{alg:preprocessing}. Assuming space is more important in this case than time (since nodes have limited memory), we find that $k=n^{1/2}$ minimizes the function $n^2/k+nk$. Hence we have the following result.
\begin{lemma}\label{lem:preprocess}
With $k=\lceil n^{1/2}\rceil$ and $p\leq k$, Algorithm~\ref{alg:preprocessing} runs in $O(n^{7/4}\log n/p)$ time and uses $O(n^{3/2}/p)$ space per processor. With $p=k$, the time and space are $O(n^{5/4})$ and $O(n)$, respectively.
\end{lemma}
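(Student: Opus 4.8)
The statement is obtained by substituting a single, carefully chosen value of $k$ into the bounds already derived above and simplifying, so the plan is essentially an optimization argument. Recall that, for any integer $k$ with $p \le k \le n$, summing the contributions of Phases~1--3 gives a total running time of $O\bigl(n^{9/4}/(pk^{5/4}) + n^{3/2}k^{1/2}\log n/p\bigr)$ and a space requirement per processor of $O\bigl(n + n^2/(pk) + nk/p\bigr)$. It remains to choose $k$ and read off the resulting expressions.

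First I would fix $k$. Because the motivating scenario is one in which each cluster node has limited memory, the quantity to minimize is the space per processor; its dependence on $k$ is entirely through $n^2/(pk) + nk/p = (1/p)(n^2/k + nk)$, so the optimal $k$ is the minimizer of $g(k) = n^2/k + nk$ and is independent of $p$. By the AM--GM inequality $g(k) \ge 2n^{3/2}$, with equality exactly when $n^2/k = nk$, i.e.\ $k = n^{1/2}$. I would then take $k = \lceil n^{1/2}\rceil$ to have an integer in the admissible range $[p,n]$; rounding up perturbs each of the three space terms and the two time terms by at most a constant factor, so every asymptotic bound is preserved. With this $k$ the space per processor is $O(n + n^{3/2}/p)$, and since $p \le k = \lceil n^{1/2}\rceil$ we have $n^{3/2}/p = \Omega(n)$, so the $O(n)$ summand is absorbed and the space is $O(n^{3/2}/p)$.

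Next I would substitute $k = \lceil n^{1/2}\rceil$ into the time bound. The first term becomes $O\bigl(n^{9/4}/(p\,n^{5/8})\bigr) = O(n^{13/8}/p)$ and the second becomes $O\bigl(n^{3/2}\,n^{1/4}\log n/p\bigr) = O(n^{7/4}\log n/p)$; since $13/8 < 14/8 = 7/4$, the second term dominates and the total running time is $O(n^{7/4}\log n/p)$. Setting $p = k = \lceil n^{1/2}\rceil$ in the two simplified expressions then gives space $O(n)$ and time $O(n^{7/4}\log n / n^{1/2}) = O(n^{5/4}\log n)$, which is the asserted $O(n^{5/4})$ up to the logarithmic factor.

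I do not expect a real obstacle: the whole content of the lemma is the balancing of $k$, and the only places needing a little attention are (i) that it is the space, and specifically its $p$-independent part $n^2/k + nk$, that one optimizes --- optimizing the time would give a different (smaller) $k$ --- (ii) that the hypothesis $p \le k$ is precisely what lets the additive $O(n)$ in the space bound be absorbed into $n^{3/2}/p$, and (iii) verifying, after substitution, that the term $n^{13/8}/p$ is the smaller of the two time terms rather than the larger, so that the $\log$-bearing term is the one that survives.
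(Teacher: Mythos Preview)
Your proposal is correct and follows essentially the same route as the paper: sum the phase bounds, minimize the space term $n^2/k+nk$ to get $k=n^{1/2}$, and substitute. You add a bit more rigor (the AM--GM justification, the explicit check that $n^{13/8}/p$ is dominated, the absorption of the $O(n)$ term via $p\le k$), and you correctly flag that the $p=k$ time bound carries an extra $\log n$ that the lemma statement silently drops.
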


The time bound of Lemma \ref{lem:preprocess} is conservative as it doesn't take into account our use of fine-grain parallelism due to multi-threading, e.g., by the GPUs.

\begin{algorithm}
\caption{Preprocessing algorithm \label{alg:preprocessing}}
\begin{algorithmic}[1]
\Require A planar graph $G$
\Ensure A data structure for efficient shortest path queries in $G$
	\Statex \Comment {Partitioning} 
	\State Construct a $k$-partition ${\cal P}$ of $G$ and assign each component $C$ to a distinct node $N(C)$
	\Statex \Comment {Solve the APSP problem for each component}
	\ForAll {components $C\in{\cal P}$} in parallel
		\State Solve APSP for $C$ and save the distances in a table APSP$(C)$
		\State For each pair of boundary vertices $v,w\in C$ define edge $(v,w)$, if not already in 
		\Statex[1] ~~~$G$, and assign a weight $\wt(v,w)=\dist_C(v,w)$
	\EndFor
	\Statex \Comment {Solve the APSP problem for the boundary graph}
	\State Define a boundary graph $\BG$ with vertices all boundary vertices of $G$ and edges as defined in the previous step and store it at each node
	\ForAll {components $C\in{\cal P}$} in parallel
		\State Solve SSSP in $\BG$ for each vertex of $C\cap \BG$ 
		\State Store the distances from all vertices of $C\cap \BG$ to all vertices of $\BG$  in a
		\Statex[1] ~~~table $\mathrm{APSP}_{\BG}(C)$
	\EndFor
\end{algorithmic}
\end{algorithm}

\subsection{Query mode}
The query algorithm (Algorithm~\ref{alg:query}) is based on the fact that if $C_1\neq C_2$, then any path between $v_1$ and $v_2$ should cross both $B(C_1)$ and $B(C_2)$. Let $\pi$ be a shortest path between $v_1$ and $v_2$. Then $\pi$ can be divided into three parts: from $v_1$ to a vertex $b_1$ from $B(C_1)$, from $b_1$ to a vertex $b_2$ on $p$ from $B(C_2)$, and from $b_2$ to $v_2$. Vertices $b_1$ and $b_2$ minimizing the length of $p$ are found as follows: in the loop on lines 2-7, for each $b_2$ an optimal $b_1$ and $\dist(v_1,b_2)$ are found; in lines 10-12 an optimal $b_2$ is found.

\begin{algorithm}
\caption{Query algorithm\label{alg:query}}
\begin{algorithmic}[1]
\Require Vertices $v_1,v_2$ of $G$, a $k$-partition ${\cal P}$ of $G$, tables APSP$(C)$ and APSP$_{\BG}(C)$ for all $C\in {\cal P}$
\Ensure $\dist(v_1,v_2)$
	\State Determine components $C_1$ and $C_2$ such that $v_1\in C_1$, $v_2\in C_2$
	\ForAll {vertices $b_2\in B(C_2)$ }  in parallel 
		\Statex \Comment {Compute $\dist(v_1,b_2)$}
		\State $\dist(v_1,b_2)=\infty$
		\ForAll {vertices $b_1\in B(C_1)$ } 
			\State $\dist(v_1,b_2)=\min\{\dist(v_1,b_2),\dist_{C_1}(v_1,b_1)+\dist_{\BG}(b_1,b_2)\}$ 
		\EndFor
	\EndFor
	\State \textbf{If} $N(C_1)\neq N(C_2)$ \textbf{then} transfer the column of $\mathrm{SP}(C_2)$ corresponding to $v_2$ from $N(C_2)$ to $N(C_1)$.
	\Statex \Comment {Now we can compute $\dist(v_1,v_2)$}
	\State $\dist(v_1,v_2)=\infty$
	\ForAll {vertices $b_2\in B(C_2)$ } 
		\State $\dist(v_1,v_2)=\min\{\dist(v_1,v_2),\dist(v_1,b_2)+\dist_{C_2}(b_2,v_2)\}$
	\EndFor
	\State \textbf{If} $C_1=C_2$ \textbf{then} $\dist(v_1,v_2)=\min\{\dist(v_1,v_2),\dist_{C_1}(v_1,v_2)\}$,
	where the distance $\dist_{C_1}(v_1,v_2)$ is taken from $\mathrm{APSP}(C_1)$.
\end{algorithmic}
\end{algorithm} 

\begin{lemma}
	Algorithm~\ref{alg:query} correctly computes $\dist(v_1,v_2)$ and its running time is $O(n^{1/4})$ with $k=\lceil n^{1/2}\rceil$ and $p\geq \lceil n^{1/4}\rceil$.
\end{lemma}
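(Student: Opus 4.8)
The statement splits into a correctness claim and an $O(n^{1/4})$ time bound, which I plan to prove in that order. For correctness I would first verify that the value returned by Algorithm~\ref{alg:query} never underestimates $\dist(v_1,v_2)$: every quantity fed into a $\min$ is the length of an actual $v_1$-to-$v_2$ walk in $G$. Indeed, a term $\dist_{C_1}(v_1,b_1)+\dist_{\BG}(b_1,b_2)$ computed in line~5 is the length of a $v_1$-to-$b_1$ path inside $C_1$ followed by a $b_1$-to-$b_2$ path which, by Lemma~\ref{lem:BG}, has the same length as some $b_1$-to-$b_2$ path in $G$; appending in line~11 a $b_2$-to-$v_2$ path inside $C_2$ yields a genuine walk, and when $C_1=C_2$ line~13 only adds the candidate $\dist_{C_1}(v_1,v_2)\ge\dist(v_1,v_2)$. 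Hence the returned value is $\ge\dist(v_1,v_2)$.

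The opposite inequality is the substantive part. Fix a shortest $v_1$-to-$v_2$ path $\pi$ in $G$, which may be taken simple. If $C_1=C_2$ and $\pi$ stays inside $C_1$, line~13 returns $\dist_{C_1}(v_1,v_2)=\dist(v_1,v_2)$. Otherwise --- either $C_1\ne C_2$, or $C_1=C_2$ but some vertex of $\pi$ lies outside $C_1$ --- let $b_1$ be the first vertex of $B(C_1)$ along $\pi$ and $b_2$ the last vertex of $B(C_2)$ along $\pi$. The topological facts to establish are: (i) $b_1$ and $b_2$ exist, since $\pi$ runs from $v_1\in C_1$ to a vertex outside $C_1$ and arrives at $v_2\in C_2$ from a vertex outside $C_2$, while membership in a component can change only at a boundary vertex of that component; (ii) the prefix of $\pi$ ending at $b_1$ stays inside $C_1$ and the suffix of $\pi$ starting at $b_2$ stays inside $C_2$, as an earlier exit from $C_1$ or a later entry into $C_2$ would produce a vertex of $B(C_1)$ before $b_1$, resp.\ of $B(C_2)$ after $b_2$, contradicting extremality; (iii) $b_1$ strictly precedes $b_2$ on $\pi$ --- when $C_1\ne C_2$ because $b_1\in C_1$ cannot lie on the suffix from $b_2$, which is inside $C_2$, and when $C_1=C_2$ because $\pi$ leaves and later re-enters $C_1$, forcing two distinct crossings of $B(C_1)$. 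Granting these, a subpath of a shortest path is shortest, so the prefix has length $\dist(v_1,b_1)=\dist_{C_1}(v_1,b_1)$ (the prefix witnesses $\dist_{C_1}(v_1,b_1)\le\dist(v_1,b_1)$, and the reverse always holds), the suffix has length $\dist_{C_2}(b_2,v_2)$, and the middle portion has length $\dist(b_1,b_2)=\dist_{\BG}(b_1,b_2)$ by Lemma~\ref{lem:BG}. Adding these gives $\dist(v_1,v_2)=\dist_{C_1}(v_1,b_1)+\dist_{\BG}(b_1,b_2)+\dist_{C_2}(b_2,v_2)$, exactly the candidate that lines~2--12 assemble for the pair $(b_1,b_2)$; so the returned value is also $\le\dist(v_1,v_2)$, and equality holds.

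For the running time I would substitute $k=\lceil n^{1/2}\rceil$ into the partition bounds, giving $|B(C)|=O(\sqrt{n/k})=O(n^{1/4})$ for every component $C$. Line~1 is an $O(1)$ lookup into a precomputed vertex-to-component map. Lines~2--7 are $|B(C_2)|=O(n^{1/4})$ independent minimizations, one per $b_2$, each scanning $|B(C_1)|=O(n^{1/4})$ values in $O(n^{1/4})$ time; everything they read --- the table $\mathrm{APSP}(C_1)$ and the $|B(C_1)|$ rows of $\mathrm{APSP}_{\BG}$ with source in $B(C_1)$ --- already resides at $N(C_1)$, so distributing the $O(n^{1/4})$ minimizations over the $p\ge\lceil n^{1/4}\rceil$ processing units leaves $O(1)$ per unit and costs $O(n^{1/4})$ time, plus $O(n^{1/4})$ to scatter the shared vector and gather the outcomes. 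Line~8 need only transfer the $O(|B(C_2)|)=O(n^{1/4})$ distances $\dist_{C_2}(b_2,v_2)$, $b_2\in B(C_2)$, that lines~9--12 actually use, and lines~9--13 form a single $O(|B(C_2)|)=O(n^{1/4})$-time pass followed by an $O(1)$ lookup. Summing, the query takes $O(n^{1/4})$ time.

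The step I expect to be the main obstacle is claim~(iii) above, together with the verification that the whole argument survives the degenerate configurations: one must rigorously establish the order of $b_1$ and $b_2$ along $\pi$ and the confinement of the two end-subpaths, while checking that nothing breaks when $C_1=C_2$ and some of $b_1,b_2,v_1,v_2$ coincide. The running-time half is essentially bookkeeping once the partition parameters are substituted; the one point worth stating explicitly is that all data used in lines~2--7 lives at $N(C_1)$, so the hypothesized $p\ge\lceil n^{1/4}\rceil$-fold parallelism is genuinely usable.
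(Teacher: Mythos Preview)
Your proposal is correct and follows essentially the same route as the paper's proof: decompose a shortest path $\pi$ at the first vertex of $B(C_1)$ and the last vertex of $B(C_2)$, invoke Lemma~\ref{lem:BG} for the middle segment, and substitute $k=\lceil n^{1/2}\rceil$ into $|B(C_i)|=O(\sqrt{n/k})$ for the time bound. You are in fact more careful than the paper---you explicitly verify the $\ge$ direction, justify that $b_1$ precedes $b_2$ (your claim~(iii)), and handle the $C_1=C_2$ degeneracies, all of which the paper's proof leaves implicit.
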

\begin{proof}
	Let $\pi$ be a shortest path between $v_1$ and $v_2$, let $C_1\neq C_2$, and let $b_1$ be the first vertex along $\pi$ that is on $B(C_1)$ and $\pi_1$ be the subpath of $\pi$ from $v_1$ to $b_1$, let $\pi_2$ be the last vertex along $\pi$ that is on $B(C_2)$ and $\pi_2$ be the subpath of $\pi$ from $b_1$ to $b_2$, and let $\pi_3$ be the subpath of $\pi$ from $b_2$ to $v_2$. Then $\pi_1$ is entirely in $C_1$ and hence $\dist_{C_1}(v_1,b_1)=\dist_{G}(v_1,b_1)$ (note, however, that the distances in APSP$(C_1)$ from $v_1$ to other vertices from $B(C_1)$ may not be correct). Similarly, $\dist_{C_2}(b_2,v_2)=\dist_{G}(b_2,v_2)$. Finally, $\dist_{\BG}(b_1,b_2)=\dist_{G}(b_1,b_2)$ by Lemma~\ref{lem:BG}. Hence lines 5 and 11 use correct values for computing the distances between $v_1$ and $b_2$ and between $b_2$ and $v_2$.
	
	If $C_1=C_2$ (line 13), then a shortest path between $v_1$ and $v_2$ may or may not leave $C_1$. In the first case lines 1-12 compute correctly $\dist(v_1,v_2)$, in the second case $\mathrm{APSP}(C_1)$ contains the correct distance. 
	
	The loop on lines 5-10 takes time $|B(C_1)||B(C_2)|/p=O(\sqrt{n/k}\sqrt{n/k}/p)=O(n/(pk))$, for  $p\leq \min\{k,(n/k)^{1/2}\}$. If  $k=n^{1/2}$ and $p=n^{1/4}$ (the maximum value for which the formula applies), that time becomes $O(n^{1/4})$. The loop in lines 10-12 takes time $O((n/k)^{1/2})=O(n^{1/4})$ for $k=n^{1/2}$.
\end{proof}

Note that using the methodology of \cite{WG-queries}, a more complex implementation of Algorithm~\ref{alg:query} can reduce the query time to logarithmic.
Note also that computation in lines 2-7 can be overlapped with transferring of data in line 8 thereby saving time (upto a factor of two).

\section{Implementation details}
In this section, we describe how the preprocessing and query modes are implemented on a hybrid CPU-GPU cluster. We use a distance matrix to represented both the input graph $G$ and the output. Such a 2-dimensional matrix contains in cell $(i,j)$ the value of the distance from vertex $i$ to vertex $j$. Initially, cell $(i,j)$ contains $\wt(i,j)$ if an edge $(i,j)$ is present in $G$, or infinity otherwise. These values are updated as the algorithm progresses. At the end of the algorithm, cell $(i,j)$ contains $\dist(i,j)$.

In phase 1 of the preprocessing mode, we construct a  $k$-partition of $G$ using the METIS library\cite{karypis1998multilevel}. Based on that partition, we reorder the vertices of $G$ so that vertices from the same component have consecutive indices and boundary vertices of each components have the lowest indices -- see Figure \ref{fig:Adjacency-matrix-after} .

\begin{figure}
\centering
\begin{minipage}{.48\textwidth}
  \centering
  \includegraphics[scale=0.3]{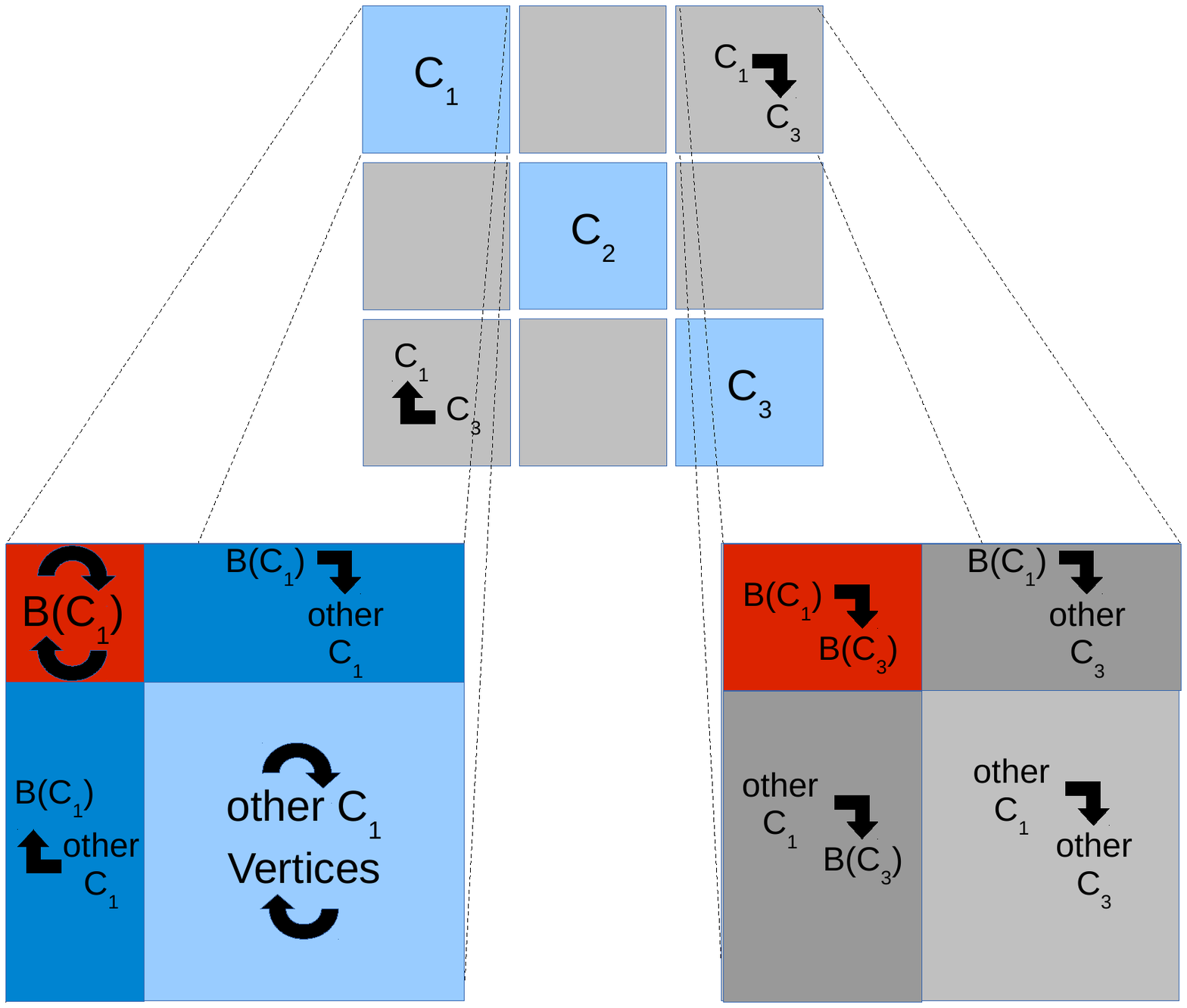}
  \captionof{figure}{Distance matrix after reordering of the vertices. Vertices from the
    same component are stored contiguously starting with boundary vertices. Red submatrices are also part of the boundary distance matrix. Grey submatrices do not generate any computations in preprocessing mode.\label{fig:Adjacency-matrix-after}}
\end{minipage}%
~~
\begin{minipage}{.48\textwidth}
  \centering
  \includegraphics[scale=0.3]{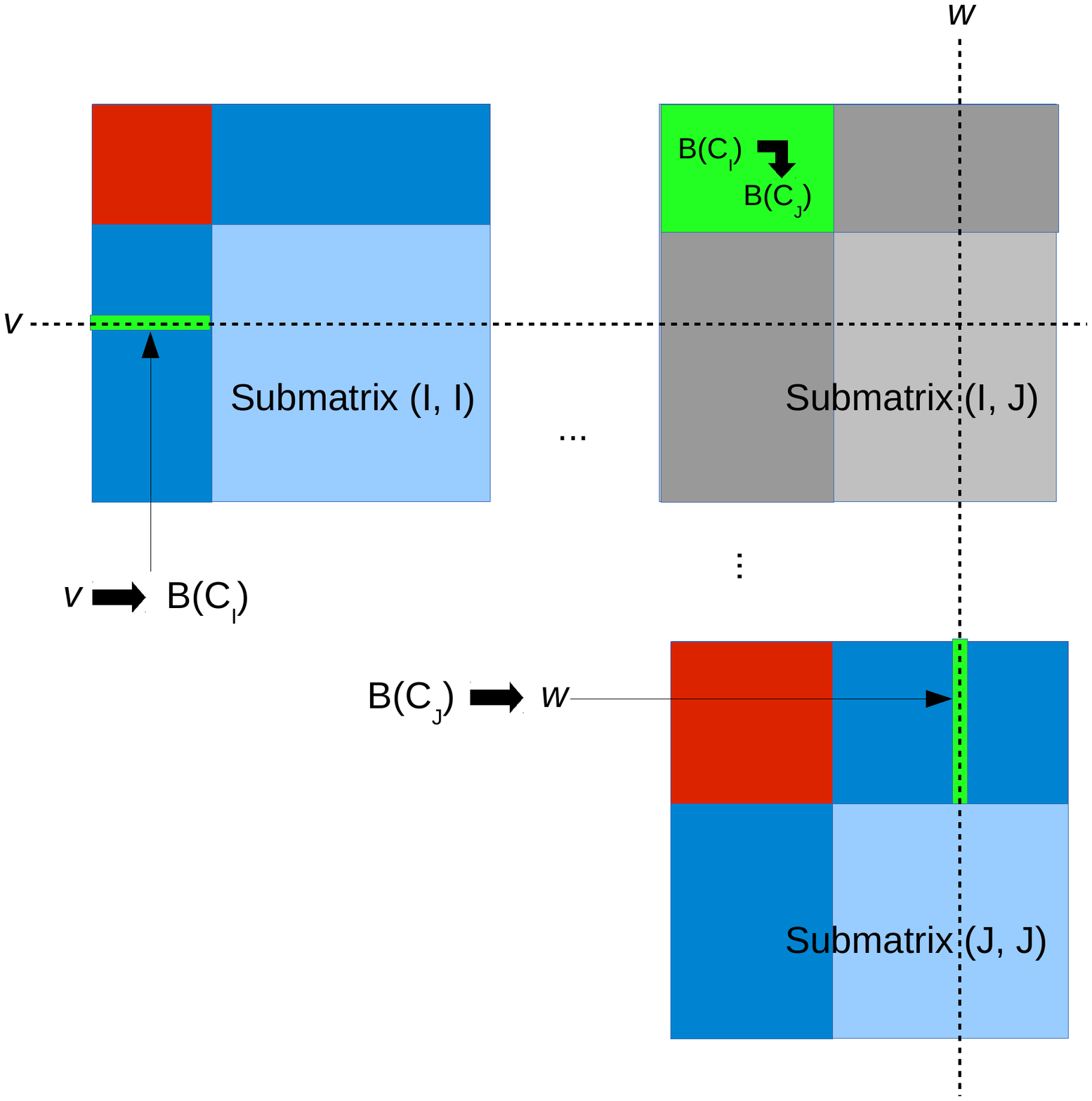}
  \captionof{figure}{The distances required to compute $\dist(v,w)$, shown in green, are scattered in three submatrices: two diagonal ones, for component $I$ and for component $J$, and a non-diagonal submatrix $(I,J)$.\label{fig:values_required}}
\end{minipage}
\end{figure}

In phase 2, we compute the shortest distances within each of the components. For $k$ components, this phase gives a total $k$ independent tasks that can be executed in parallel. 
Computations at this phase are already balanced across nodes as components contain roughly the same number of vertices and the APSP algorithm from \cite{ipdps-apsp} ensures the same $O(N^{9/4})$ complexity with respect to the number of nodes.

Finally, phase 3 consists in computing the shortest distances within the boundary graph using Dijkstra's algorithm. Computations at this phase may be imbalanced between nodes for two reasons. First, the number of boundary vertices in two components may differ and, second, the complexity of Dijkstra's algorithm does not solely depend on the number of vertices in the graph, but also on the number of edges, which may vary even more than the number of vertices between two components' boundary graphs.

In the query mode, we are interested in finding $\dist(v,w)$, where $v$ and $w$ are from components $I$ and $J$, respectively. The required values for that computation are scattered in three submatrices, as illustarted in Figure~\ref{fig:values_required}.
For such a query, assuming $k=p$, node $i$, holding the required values from diagonal submatrix $I$ and non-diagonal submatrix $(I, J)$, will be in charge of the computations. Required values from diagonal submatrix $J$ are held by node $j$ and need to be transfered to node $i$.

\section{Experimental evaluation}
In this section we describe experiments designed to test our algorithm and its implementation. Specifically, we are going to test the strong scaling properties by running our code  on a fixed graph size and a varying number $p$ of cluster nodes and number $k$ of components. 
All computations are run on a 300 node cluster. Each cluster node is comprised of 2 x Eight-Core Intel Xeon model E5-2670 @ 2.6 GHz and two GPGPU Nvidia Tesla M2090 cards connected to PCIe-2.0 x16 slots. In order to make full use of the available GPUs, each node is assigned at least two graph components so that the two associated diagonal submatrices can be computed simultaneously on the two GPUs.

For the strong-scaling experiment, the graph size is fixed to 256k vertices. Preprocessing and queries are run with increasing numbers of nodes ranging from $4$ to $64$. Each node handles $2$ components (one per available GPU); therefore the number of components $k$ ranges from $8$ to $128$. 

\begin{figure}
\centering
\begin{minipage}{.5\textwidth}
  \centering
  \includegraphics[bb=40bp 250bp 560bp 540bp,clip,scale=.35]{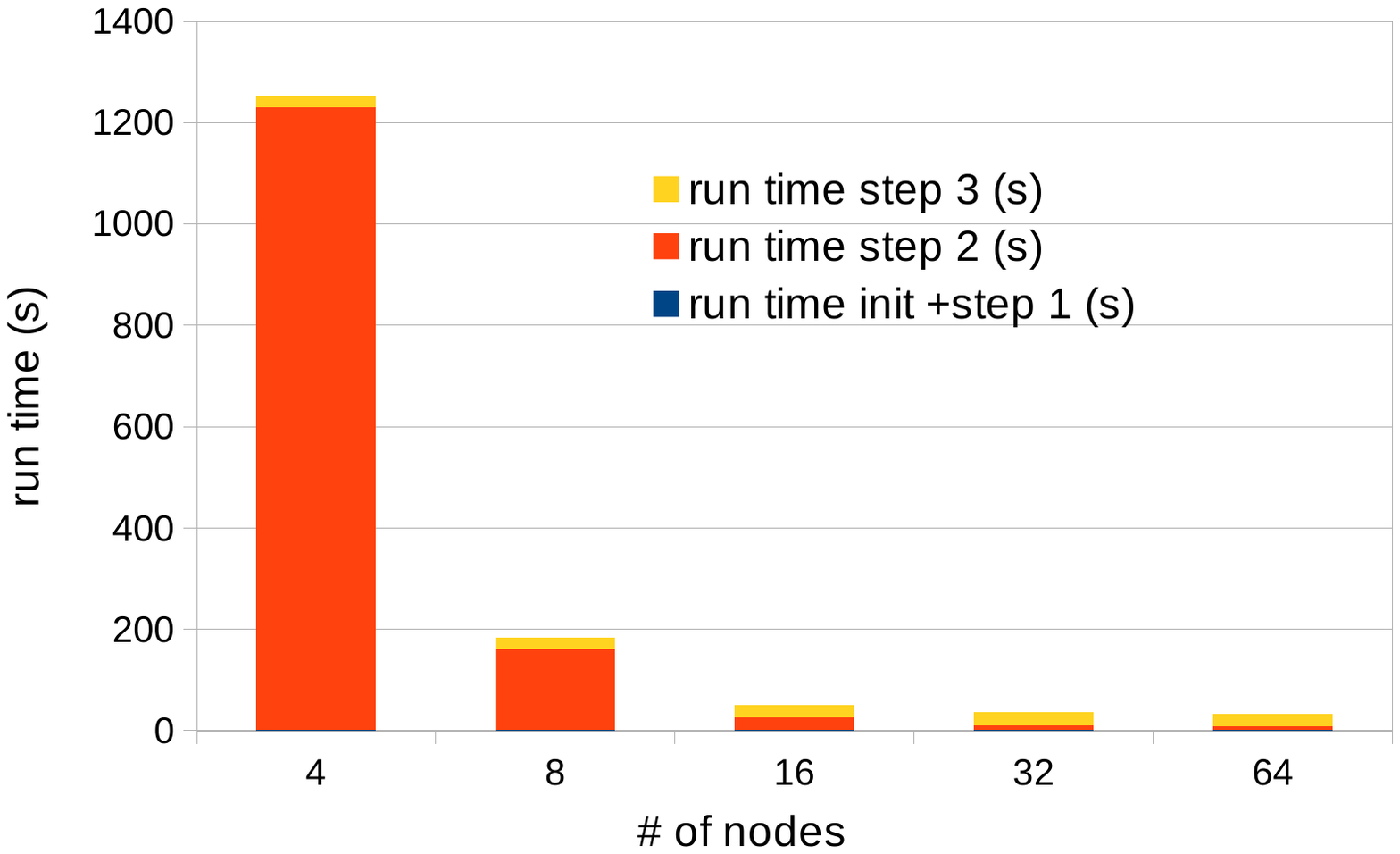}
  \captionof{figure}{Preprocessing run times for a fixed graph size of 256k vertices and increasing number of nodes.}
  \label{fig:exp1_preprocessing}
\end{minipage}%
~~
\begin{minipage}{.5\textwidth}
  \centering
  \includegraphics[bb=40bp 250bp 600bp 540bp,clip,scale=.35]{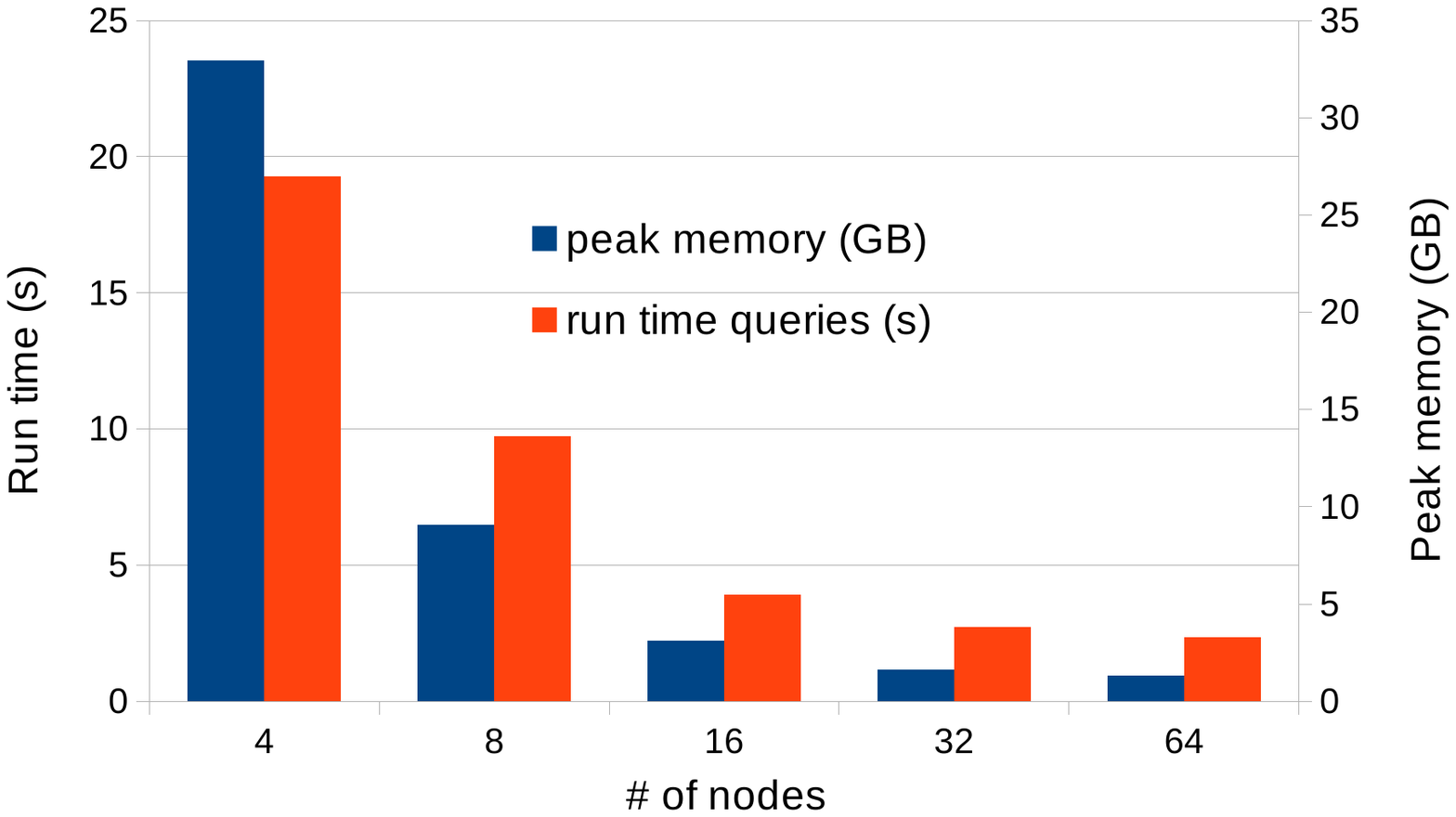}
  \captionof{figure}{Peak memories and run times for 10k queries for a fixed graph size of 256k vertices and increasing number of parts/processors.}
  \label{fig:exp1_queries}
\end{minipage}
\end{figure}

Figure~\ref{fig:exp1_preprocessing} shows the run times for the preprocessing mode. For low numbers of nodes and thus low values of $k$, preprocessing time is dominated by step 2 - the computation of the shortest distances within each component - since lower $k$ values means larger components. For higher numbers of nodes and thus higher values of $k$, preprocessing time becomes dominated by step 3 - the computation of the boundary graph - as more components mean higher numbers of incident edges and thus larger boundary graphs. 
Note that while the figure seems to show supralinear speedup, that is not the case (and similarly for the memory usage).
The reason is that, with increasing the number of processors $p$, the number $k$ of parts is increased too (as it is tied to $p$ in this implementation) and hence the complexity of the algorithm is also reduced.

Figure~\ref{fig:exp1_queries} shows the query times and peak memory usage per node. The run times are given for $10,000$ queries from random sources to random targets. Note that in the query mode only fine-grain (node-level) parallelism is used, while multiple nodes are still needed for distributed storage and, optionally, to handle multiple queries in parallel (not implemented in the current version).
For the memory usage, the optimal value for $k$, theoretically expected to be $\sqrt{n}$ -- or $512$ for this instance -- is not reached in this experiment since $k$ only goes up to $128$. We can however see that peak memory usage per node is still dropping with increasing values of $k$ up to 128. The query times in the figure vary from about 2 milliseconds per query for $k=8$ to 0.25 milliseconds for $k=128$. Compared with the Boost library implementation of Dijkstra's algorithm, our implementation answers queries on the largest instances about 1000 times faster.  

%

\section{Conclusion}
We developed and implemented a distributed algorithm for shortest path queries in planar graphs with good scalability. It allows answering SP queries in $O(n^{1/4})$ time by using $O(\sqrt{{n}})$ processors with $O(n)$ space per processor and $O(n^{5/4})$ preprocessing time. Our implementation on 300 node CPU-GPU cluster has preprocessing time of less than 10 seconds using 32 or more nodes and 0.025 milliseconds per query using two nodes. Interesting tasks for future research is implementing a version allowing parallel queries and reducing the query time of the implementation to $O(\log n)$ by using properties of graph planarity. 


\end{document}